\newtheorem{theorem}{Theorem}[section]
\newtheorem{proposition}[theorem]{Proposition}
\theoremstyle{definition}
\theoremstyle{remark}
\title{GeoTMI: \\Predicting Quantum Chemical Property with Easy-to-Obtain Geometry via Positional Denoising}
\author{
  Hyeonsu Kim\thanks{Equal contributors.}\\ 
  Department of Chemistry\\
  KAIST\\
  Daejeon, South Korea \\
  \And
  Jeheon Woo$^*$\\ 
  Department of Chemistry\\
  KAIST\\
  Daejeon, South Korea \\
  \And
  Seonghwan Kim$^*$\\ 
  Department of Chemistry\\
  KAIST\\
  Daejeon, South Korea \\
  \And
  Seokhyun Moon$^*$\\ 
  Department of Chemistry\\
  KAIST\\
  Daejeon, South Korea \\
  \And
  Jun Hyeong Kim$^*$\\ 
  Department of Chemistry\\
  KAIST\\
  Daejeon, South Korea \\
  \And
  Woo Youn Kim\thanks{Corresponding author: wooyoun@kaist.ac.kr}\\ 
  Department of Chemistry\\
  KAIST\\
  Daejeon, South Korea \\
}
\begin{document}

\maketitle

\begin{abstract}
As quantum chemical properties have a dependence on their geometries, graph neural networks (GNNs) using 3D geometric information have achieved high prediction accuracy in many tasks. However, they often require 3D geometries obtained from high-level quantum mechanical calculations, which are practically infeasible, limiting their applicability to real-world problems. To tackle this, we propose a new training framework, GeoTMI, that employs denoising process to predict properties accurately using easy-to-obtain geometries (corrupted versions of correct geometries, such as those obtained from low-level calculations). Our starting point was the idea that the correct geometry is the best description of the target property. Hence, to incorporate information of the correct, GeoTMI aims to maximize mutual information between three variables: the correct and the corrupted geometries and the property. GeoTMI also explicitly updates the corrupted input to approach the correct geometry as it passes through the GNN layers, contributing to more effective denoising. We investigated the performance of the proposed method using 3D GNNs for three prediction tasks: molecular properties, a chemical reaction property, and relaxed energy in a heterogeneous catalytic system. Our results showed consistent improvements in accuracy across various tasks, demonstrating the effectiveness and robustness of GeoTMI.
\end{abstract}

\section{Introduction}

Neural networks have been actively applied to various fields of molecular and quantum chemistry \cite{Schtt2019, Manzhos2020, Dral2020, Park2022}.
Several input representations, such as the SMILES string and graph-based representations, are employed to predict quantum chemical properties \cite{David2020, Jiang2021}.
In particular, graph neural networks (GNNs), which operate on molecular graphs by updating the representation of each atom via message-passing based on chemical bonds, have achieved great success in many molecular property prediction tasks \cite{Schnet, Yang2019, Xiong2019, Zhou2020, Moon2022, Kim2022}.

However, as many quantum chemical properties depend on molecular geometries, typical GNNs without 3D geometric information have limitations in their accuracy.
In this respect, GNNs utilizing 3D information have recently achieved state-of-the-art accuracy \cite{Schnet,DimeNet,SphereNet,EGNN,Equiformer,transformer-M,SCN}.
Despite of their impressive accuracy, the usage of the 3D input geometry is often infeasible in real-world applications, limiting the 3D GNNs' applicability \cite{QM9M, 3DInfomax, GraphMVP, molecule3D, OC20}.
Therefore, it is natural to train machine learning models to make predictions with relatively easy-to-obtain geometries.
Several studies have investigated the use of easy-to-obtain geometry as input, and it has been empirically confirmed that such geometry can be leveraged to accurately predict target properties \cite{QM9M,molecule3D,OC20}.
Yet, theoretical basis for fully exploiting such easy-to-obtain geometries to predict accurate target properties remains to be established.

This study proposes a novel training framework, namely ``\textbf{Geo}metric denoising for \textbf{T}hree-term \textbf{M}utual \textbf{I}nformation maximization (GeoTMI)'', which employs a denoising process to accurately predict quantum chemical properties using easy-to-obtain geometries.
Throughout this paper, we denote the correct geometry as $X$, the easy-to-obtain geometry (regarded as the corrupted version of $X$) as $\tilde{X}$, and the target property as $Y$.
Various previous studies have been conducted on denoising approaches, such as a denoising autoencoder (DAE) \cite{ExtractDAE,StackDAE,JointDAE,SimpleGNN,PretrainingNoisyNode,DDM}.
When it comes to predicting quantum chemical properties, the predominant focus of denoising techniques has been on improving the prediction accuracy starting from $X$. GeoTMI, however, aims at improving the prediction accuracy starting from $\tilde{X}$.
GeoTMI also explicitly updates the input geometry of $\tilde{X}$ to approach $X$ as it passes through the GNN layers, thereby contributing to more effective denoising.
Furthermore, GeoTMI incorporates an auxiliary objective that predicts $Y$ from $X$, allowing it to capture the task-relevant information and ultimately maximize the mutual information (MI) between the three terms of $X$, $\tilde{X}$, and $Y$.
The theoretical derivations in this study provide further support for this approach.

GeoTMI offers the advantage of being model-agnostic and easy to integrate into existing GNN architectures.
Thus, in this study, we aimed to validate the effectiveness of GeoTMI on different GNN architectures and for various prediction tasks (the nine other molecular properties of the QM9 \cite{QM9}, a chemical reaction property of Grambow's dataset \cite{Grambow2020}, and relaxed energy in a heterogeneous catalytic system of the Open Catalyst 2020 (OC20) dataset \cite{OC20}).
We evaluated the performance of GeoTMI by comparing it to baselines trained only with $\tilde{X}$ and $Y$ using multiple 3D GNNs.
GeoTMI showed consistent accuracy improvements for all the target properties tested.
In particular, in our experiment on the IS2RE task of the OC20, GeoTMI achieved greater performance improvements than another denoising method, Noise Nodes \cite{SimpleGNN}, demonstrating the superiority of GeoTMI.
Overall, our findings demonstrate that GeoTMI can make accurate and robust predictions with easy-to-obtain geometries. Code is available on \href{https://anonymous.4open.science/r/GeoTMI-1DF5}{Github}.


\section{Related Works}
\subsection{Predicting high-level properties from easy-to-obtain geometry}

Recently, several deep learning approaches have aimed to predict high-level properties from an easy-to-obtain geometry for accurate yet fast predictions in real-world applications.
For instance, Molecule3D benchmark \cite{molecule3D} aims to improve the applicability of existing 3D models by developing machine learning models that predict 3D geometry.
These models predicted 3D geometry using 2D graph information that can be easily obtained and were evaluated using ETKDG \cite{ETKDG} from RDKit \cite{rdkit} as a baseline.

There have been attempts to predict high-level properties, starting with a geometry that can be quickly obtained by conventional methods, rather than machine learning methods.
\citet{QM9M} adopted Merck molecular force field (MMFF) \cite{MMFF94} geometries as the starting point, to predict density functional theory (DFT) \cite{DFT} properties of the molecules in the QM9 dataset.
In chemical reactions, \citet{DimeReaction} exploited reactant and product geometries to assess reaction barrier height rather than reactant and transition state (TS) geometries; because obtaining the TS geometry is computationally challenging.
In addition, \citet{OC20} proposed the Open Catalyst challenge.
In this challenge, the IS2RE task
uses initial structures (IS) for geometry optimization to predict the relaxed energies (RE) of the corresponding relaxed structures (RS).
In this case, the IS and the RE can be mapped into easy-to-obtain geometries and high-level properties, respectively.
Various approaches have been proposed to address this challenge  \cite{Equiformer,SimpleGNN,gemnet}.

GeoTMI shares the same goal as these previous works. However, it is important to emphasize that we propose a training framework based on a theoretical basis that possesses the capacity to be applicable to various tasks, rather than being limited to a specific task.

\subsection{Denoising approaches in GNN}
Denoising is a commonly used approach for representation learning by recovering correct data from corrupted data.
Previous studies have shown that models can learn desirable representations by mapping from a corrupted data manifold to a correct data manifold.
Traditional denoising auto-encoders (DAEs) employed a straightforward procedure of recovering a correct data from corrupted data thus maximizes the MI between correct data and its representation \cite{ExtractDAE, StackDAE}.
Recently, several studies in GNNs have adopted denoising strategies for representation learning and robustness during training \cite{PretrainingNoisyNode,DDM,GNNNoisePT,GNSNoise,GNNNoise}.
For instance, Noisy Nodes \cite{SimpleGNN}, which primary aim is addressing oversmoothing in GNNs, used denoising noisy node information as an auxiliary task, resulting in improved performance in property prediction.
Additionally, LaGraph \cite{LaGraph} leveraged predictive self-supervised learning of GNNs to predict the intractable latent graph that represents semantic information of an observed graph, by introducing a surrogate loss originated from image representation learning \cite{Noise2Same}.
While typical denoising approaches focus on learning representations of expensive $X$, GeoTMI aims to learn higher-quality representations for $\tilde{X}$, lying on a geometrically corrupted data manifold, to predict $Y$.
For this purpose, GeoTMI adopts the maximization of the three-term MI between $X$, $\tilde{X}$, and $Y$ with theoretical basis.

\subsection{Invariant 3D GNNs for quantum chemical properties}
In the field of chemistry, GNNs utilizing 3D geometric information have shown promising performance in predicting quantum chemical or systematic properties \cite{Schnet,DimeNet,SphereNet,EGNN,gemnet,Schnetpack,DimeNet++,ComENet}.
Since target physical quantities, such as an energy, are invariant to alignments of a molecule, 3D GNN models utilize roto-translational invariant 3D information as their inputs \cite{pmlr-v139-schutt21a, jing2020learning}.
As a representative example, a distance matrix guarantees the invariance because the roto-translational transformation does not vary distances.
SchNet \cite{Schnet, Schnetpack} and EGNN \cite{EGNN} are proper examples of utilizing the distance matrix.
The former exploits the radial basis function based on the distance matrix, while the latter uses distance information directly on the GNN message-passing scheme.
In DimeNet++ \cite{DimeNet++}, along with the distance matrix, bond angles are also available as invariant 3D information.
In addition, ComENet \cite{ComENet} and SphereNet \cite{SphereNet} introduced dihedral angles in addition to the distance and bond angle information.
Recently, several approaches such as Equiformer \cite{Equiformer} explicitly considered irreducible representations to construct roto-translational equivariant neural networks \cite{SE3Transformer,SEGNN}.
Our evaluation showed that GeoTMI is model-agnostic, hence can be easily applied to various 3D GNNs.

\section{Method}

\begin{figure}[ht]
\vskip 0.2in
\vspace{-0.2in}
\begin{center}
\centerline{\includegraphics[width=\columnwidth]{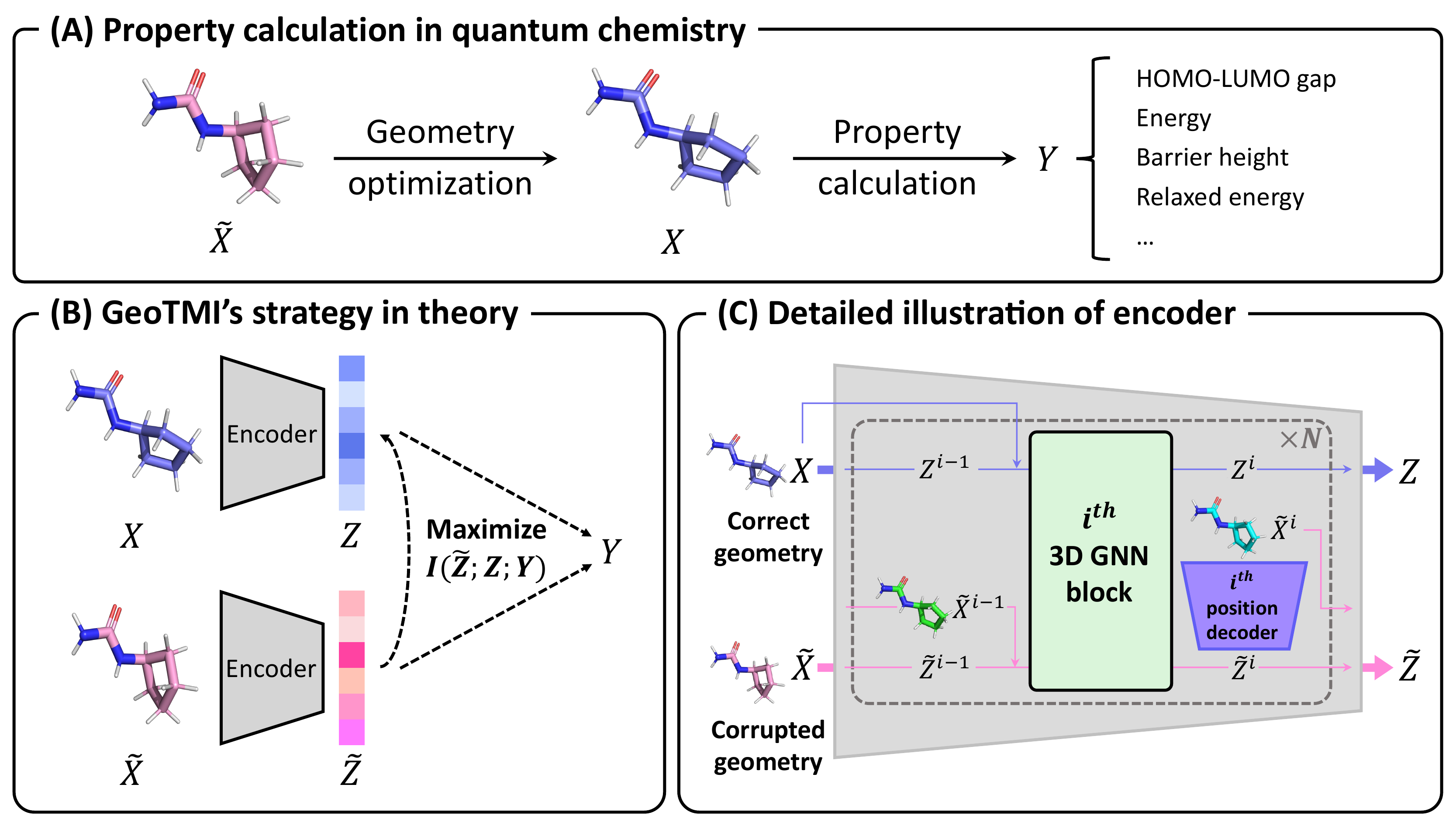}
}
\caption{
(A) Physical relationship between $\tilde{X}$, $X$, and $Y$ in property calculation in quantum chemistry.
(B) Schematic illustration of GeoTMI's strategy in theory, where objective is maximizing three-term MI, $I(\tilde{Z};Z;Y)$.
(C) Detailed illustration of encoder architecture in practical strategy of GeoTMI.
Training process employs both blue and pink lines, while inference process utilizes only pink line.
All molecular geometries were
plotted using PyMOL \cite{pymol}.}
\label{fig:1}
\end{center}
\vskip -0.2in
\end{figure}

In this section, we describe the overall framework of our proposed GeoTMI with theoretical background.
First, in \cref{sec:problem_setup}, we introduce the problem setting, i.e., the physical relationships required to predict a property from a corrupted geometry.
Then, in \cref{sec:training_objective}, we introduce our training objective for three-term MI, which differs from the objective of typical supervised learning.
Since MI itself is intractable, we derive a tractable loss for the training objective in \cref{sec:tractable_loss_derivation}.
Finally, in \cref{sec:overall_framework}, we illustrate the practical application of GeoTMI framework in the training and inference processes.

\subsection{Problem setup}
\label{sec:problem_setup}
We first introduce physical relationship between our data: corrupted geometry, $\tilde{X}$, correct geometry, $X$, and quantum chemical property, $Y$.
Our training data, $\mathcal{D}$, consist of observed samples $(\tilde{x},x,y)$ from the triplet of three random variables $(\tilde {X}, X, Y)\sim q(\tilde{X}, X, Y)$.
We assume that these three variables are interlinked through a Markov chain $\tilde{X}\to X\to Y$.
In our problem setting, $\tilde{Z}$ and $Z$ denote representations of $\tilde{X}$ and $X$, respectively, whose probability distributions are parameterized by $\theta$, $p_\theta(\tilde{Z},Z,Y)$.

Within the standard computational chemistry process, $Y$ is obtained from the correct geometry $X$, which is acquired through geometric optimization of $\tilde{X}$ as shown in \cref{fig:1}(A).
This process naturally gives rise to the Markov chain assumption, which suggests that $X$ encapsulates all the essential information for $Y$.
We can establish two assumptions employing the underlying physical relationship between the variables as an inductive bias.
First, there exists a higher quality of information pertaining to $Y$ within $X$ compared to $\tilde{X}$, or, more precisely, the MI between $X$ and $Y$ that is equal to or greater than the MI between $\tilde{X}$ and $Y$.
This naturally follows from the property of conditional independence between non-adjacent states in a Markov chain.
Second, the data distribution of $Y$ is solely dependent on $X$, irrespective of the presence of $\tilde{X}$, implying that $\tilde{X}$ and $Y$ are conditional independent given $X$.

The goal of GeoTMI is to obtain a proper representation $\tilde{Z}$ in predicting $Y$, by aligning it into $Z$ that contains more enriching information for $Y$.
GeoTMI differs to self-supervised learning by emphasizing a specialized representation that is tailored to the target property.
Also, acquisition of physical relationship between the variables as inductive bias leads to a higher quality representation than focusing on predicting $Y$ using $\tilde{X}$ alone.

\subsection{Training objective}
\label{sec:training_objective}
We propose a training framework for learning a proper representation for predicting $Y$ from $\tilde{X}$, which can be done by maximizing the MI between the variables, $I_{\theta}(\tilde{Z};Y)$.
This is somewhat similar to the objective of the general supervised learning which predicts $Y$ from $\tilde{X}$.
However, the training only with $\tilde{X}$ to predict $Y$ could be erroneous because there is no guarantee a model utilize the proper information resided in both $\tilde{X}$ and $X$.

If we introduce $Z$, one can express $I_{\theta}(\tilde{Z};Y)$ as following:
\begin{equation}
\label{eq:MI_tildeZ_Y}
I_{\theta}(\tilde{Z};Y)=I_{\theta}(\tilde{Z};Y|Z)+I_{\theta}(\tilde{Z};Z;Y).
\end{equation}

In \cref{eq:MI_tildeZ_Y}, the conditional MI $I_{\theta}(\tilde{Z};Y|Z)$ implies undesirable information of $\tilde{Z}$ in predicting $Y$ that is not relevant to $Z$.
This is a direct counterpart to the physical inductive bias in the previous section, where $X$ is sufficient information for the prediction of $Y$, and thus should be minimized to zero in the optimal case (see \cref{app:conditionally_independent_random_variables}).
Maximizing $I_\theta(\tilde{Z};Y)$ while maintaining the zero inductive bias is ideal, but non-trivial and challenging. 
To address this, we propose a straightforward solution by introducing the inductive bias as a regularization term, which reformulates our initial objective into maximization of the three-term MI,
\begin{equation}
I_\theta(\tilde{Z};Z;Y)=I_\theta(\tilde{Z};Y)-I_\theta(\tilde{Z};Y\vert Z). \nonumber
\end{equation}

\subsection{Tractable loss derivation}
\label{sec:tractable_loss_derivation}
In general, MI is not tractable, and accurately estimating it is another challenging task.
Thus, we have derived a tractable lower bound of the three-term MI, allowing us to practically maximize it (see \cref{fig:1}(B)).

\begin{proposition}
\label{prop:MI_LB}
$I(\tilde{Z};Z;Y)\ge \mathrm{LB}+H(Y)$ for any triplets of random variables $(\tilde{Z}, Z, Y)$, 
where $\mathrm{LB}=-H(Y|Z)-\frac{1}{2}H(Y|\tilde{Z})-\frac{1}{2}H(Z|\tilde{Z})$.
\end{proposition}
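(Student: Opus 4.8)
The plan is to reduce the claim, via the symmetry of the three-term interaction information, to a single elementary entropy inequality; no Markov or conditional-independence hypothesis is needed, which is consistent with the statement being asserted for \emph{arbitrary} triplets $(\tilde{Z},Z,Y)$.

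First I would rewrite $I(\tilde{Z};Z;Y)$ by pairing $Z$ with $Y$ and conditioning on $\tilde{Z}$. The interaction information is symmetric in its three arguments (as is visible from the inclusion--exclusion identity $I(\tilde{Z};Z;Y)=H(\tilde{Z})+H(Z)+H(Y)-H(\tilde{Z},Z)-H(\tilde{Z},Y)-H(Z,Y)+H(\tilde{Z},Z,Y)$), so the same decomposition that yields \cref{eq:MI_tildeZ_Y} also gives
\begin{equation}
I(\tilde{Z};Z;Y)=I(Z;Y)-I(Z;Y\mid\tilde{Z}). \nonumber
\end{equation}
Expanding $I(Z;Y)=H(Y)-H(Y\mid Z)$ turns the target inequality $I(\tilde{Z};Z;Y)\ge H(Y)+\mathrm{LB}$ into
\begin{equation}
H(Y)-H(Y\mid Z)-I(Z;Y\mid\tilde{Z})\ \ge\ H(Y)-H(Y\mid Z)-\tfrac12 H(Y\mid\tilde{Z})-\tfrac12 H(Z\mid\tilde{Z}), \nonumber
\end{equation}
so that after cancellation it remains only to show $I(Z;Y\mid\tilde{Z})\le\tfrac12 H(Y\mid\tilde{Z})+\tfrac12 H(Z\mid\tilde{Z})$.

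Second, I would finish with the two one-line bounds that a conditional mutual information is dominated by either of the corresponding conditional entropies: $I(Z;Y\mid\tilde{Z})=H(Y\mid\tilde{Z})-H(Y\mid\tilde{Z},Z)\le H(Y\mid\tilde{Z})$ and, symmetrically, $I(Z;Y\mid\tilde{Z})\le H(Z\mid\tilde{Z})$, both using nonnegativity of conditional entropy. Averaging these two inequalities gives exactly the required bound, completing the proof.

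There is no real obstacle here — the argument is only a few lines — but the one point that needs care is the first identity: one must either invoke the symmetry of interaction information, or re-derive $I(\tilde{Z};Z;Y)=I(Z;Y)-I(Z;Y\mid\tilde{Z})$ directly from the inclusion--exclusion formula above by regrouping terms. It is worth noting \emph{why} one pairs $Z$ with $Y$ rather than $\tilde{Z}$ with $Y$: only with this pairing is the leftover term $I(Z;Y\mid\tilde{Z})$ simultaneously bounded by $H(Y\mid\tilde{Z})$ and by $H(Z\mid\tilde{Z})$, which is precisely what produces the symmetric $\tfrac12$--$\tfrac12$ weighting appearing in $\mathrm{LB}$.
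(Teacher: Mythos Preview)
Your proof is correct and is essentially the same argument as the paper's: both hinge on the identity $I(\tilde{Z};Z;Y)=I(Z;Y)-I(Z;Y\mid\tilde{Z})$ together with the two bounds $I(Z;Y\mid\tilde{Z})\le H(Z\mid\tilde{Z})$ and $I(Z;Y\mid\tilde{Z})\le H(Y\mid\tilde{Z})$, then average. The paper merely packages these as two separate lower bounds on $I(\tilde{Z};Z;Y)$ before averaging, whereas you average the two entropy bounds on $I(Z;Y\mid\tilde{Z})$ directly; the content is identical.
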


Since $H(Y)$ is constant term in respect of model parameters, we have the three distincts optimization targets: (1) property from corrupted representation $H(Y|\tilde{Z})$, (2) property from correct representation $H(Y|Z)$, and (3) reconstruction to correct representation $H(Z|\tilde{Z})$.

The conditional entropy term related to the property is estimated by a parameterized distribution $p_{\pi_{1}}$ based on the positiveness of KL divergence:
\begin{align}
\nonumber
-H(Y|Z)-H(Y|\tilde{Z})&=\mathbb{E}_{p_{\theta}(\tilde{Z},Z,Y)}\left[\log p_\theta(Y|\tilde{Z})+\log p_\theta(Y|Z)\right] \\\nonumber
&\ge\mathbb{E}_{p_{\theta}(\tilde{Z},Z,Y)}\left[\log p_{\pi_{1}}(Y|\tilde{Z})+\log p_{\pi_{1}}(Y|Z)\right] \\\nonumber
&\sim-\sum\left(\mathcal{L}\left(y,h_{\pi_1}(\tilde{z})\right) +\mathcal{L}\left(y,h_{\pi_1}(z)\right)\right). \nonumber
\end{align}

Here, we introduce property predictor $h_{\pi_1}$ which is parameterized by $\pi_1$.
Similarly, the other term is estimated by a parameterized distribution $p_{\pi_2}$,
\begin{align}
\nonumber
-H(Z|\tilde{Z})&=\mathbb{E}_{p_{\theta}(\tilde{Z},Z,Y)}\left[\log p_\theta(Z|\tilde{Z})\right] \\ \nonumber
&\ge \mathbb{E}_{p_\theta(\tilde{Z},Z,Y)}\left[\log p_{{\pi_{2}}}(Z|\tilde{Z})\right] \\\nonumber
&\sim -\sum \mathcal{L}(z,\hat{g}_{\pi_{2}}(\tilde{z})). \nonumber
\end{align}
Here, $\hat{g}_{\pi_2}: \tilde{Z}\to Z$ denotes a parametric decoder for information flows.
Since $Z$ is a parameterized variable which is not optimal in an initial training stage, the optimization could be unstable.
If we assume the encoder $f_\theta:X\to Z$ is continuous bijective, we could introduce a surrogate loss of decoding $\tilde{Z}$ into $X$,

\begin{equation}
\nonumber
\mathbb{E}_{p_{\theta}(\tilde{Z},Z,Y)}\left[\mathcal{L}\left(f_{\theta}^{-1}(Z),f_{\theta}^{-1}\circ \hat{g}_{\pi_2}(\tilde{Z})\right)\right]=\mathbb{E}_{p_{\theta}(\tilde{Z},Z,Y)}\left[\mathcal{L}\left(X,g_{\pi_2}(\tilde{Z})\right)\right],
\end{equation}

where $g_{\pi_2}=f_{\theta}^{-1}\circ\hat{g}_{\pi_2} : \tilde{Z}\to X$ denotes a decoder reconstructing $X$ rather than $Z$.
It still maximizes MI between $\tilde{Z}$ and $Z$, in that the continuous and bijective mapping does not change the MI.
In summary, the training process is about finding optimal model parameters $\theta$, $\pi_1$, and $\pi_2$ to minimize the following:
\begin{equation}
\nonumber
\mathbb{E}_{p_\theta(Z,\tilde{Z},Y)}
\bigg[
\underbrace{\mathcal{L}\big(Y,h_{\pi_1}(\tilde{Z})\big)}_{\mathcal{L}_\mathrm{y,corrupted}}+
\underbrace{\mathcal{L}\big(Y,h_{\pi_1}(Z)\big)}_{\mathcal{L}_\mathrm{y,correct}}+
\underbrace{\mathcal{L}\big(X,g_{\pi_2}(\tilde{Z})\big)}_{\mathcal{L}_d}
\bigg].
\end{equation}

The tractable loss function comprises the three terms: $\mathcal{L}_{\mathrm{y,corrupted}}$, $\mathcal{L}_{\mathrm{y,correct}}$, and $\mathcal{L}_{d}$.
We refer to $\mathcal{L}_\mathrm{y}$ as the property prediction loss and $\mathcal{L}_d$ as the denoising loss.
We chose the absolute error for the loss function $\mathcal{L}$.
The proof of \cref{prop:MI_LB} and details of the denoising loss are described in \cref{Appendix.A}.

\subsection{Overall framework}
\label{sec:overall_framework}
The proposing framework comprises the encoder, predictor, and decoder.
The encoder maps molecular geometries to their representations, while the predictor estimates target properties, and the decoder restores the molecular geometries.
The encoder design involves 3D GNN layers for both $X$ and $\tilde{X}$, sharing model parameters.
It is appropriate approach because $X$ and $\tilde{X}$ belong to the same data modality.
In addition, the encoder for $\tilde{X}$ includes explicit position update layers that are inspired by the geometry optimization process.
The effect of intermediate geometries as input is studied in \cref{app:easy2obtain2corrupted}.
The practical model architecture including the encoder design is depicted in \Cref{fig:1}(C).

In practice, an auxiliary loss is introduced as an add-on for the denoising loss to softly guide the position update toward $X$.
We will refer to this as gradual denoising loss, which measures the difference between each updated geometry and the corresponding linearly interpolated target geometry. 
The details and ablation study of this are in \cref{app:denoising_method_ablation}.
We chose the same architecture of the position update layer for the decoder.

During training, $\tilde{X}$ and $X$ are mapped to $\tilde{Z}$ and $Z$ respectively.
The property prediction loss is computed based on the results from $\tilde{Z}$ and $Z$, while the denoising loss involves reconstruction of $X$ from $\tilde{Z}$.
In inference process, only $\tilde{Z}$ encoded by $\tilde{X}$ is used for property prediction.
It is noteworthy that GeoTMI introduces a novel representation learning approach that leverages $X$ and $Z$ for robust property prediction, and its effectiveness lies in not requiring $X$ and $Z$ during the inference process.

\section{Experiments}
We have tried to demonstrate the effectiveness of GeoTMI in providing a new solution to the infeasibility of high-level 3D geometry, rather than focusing on the performance of the state-of-the-art GNN architecture itself.
Thus, in this section, we have focused on showing the applicability of GeoTMI to a variety of GNN architectures and its effectiveness in predicting properties in various areas of chemistry.
The tasks and architectures tested were selected based on computational cost and memory efficiency, as well as model performance.
All experiments were conducted using RTX 2080 Ti GPU with 12 GB of memory, RTX 3080 Ti GPU with 12 GB of memory, or RTX A4000 GPU with 16 GB of memory.
GNN models were trained on a single GPU, except for those in the IS2RE task of OC20, where we used eight RTX A4000 GPUs.

\subsection{Molecular property prediction}
\label{subsection:Molecular property prediction}

Predicting molecular properties is crucial to various fields in chemistry.
The QM9 \cite{QM9} is widely used benchmark dataset for molecular property prediction comprised of 134k molecule information; each molecule consists of at most nine heavy atoms (C, N, O, and F).
Each data sample contains optimized geometry and more than 10 corresponding DFT properties.

This study focuses on the QM9$_\mathrm{M}$ \cite{QM9M} task which predicts DFT properties using the MMFF geometry.
The QM9$_\mathrm{M}$ dataset originated from the QM9 differs only in the molecular geometry; each geometry herein has been obtained with additional MMFF optimization starting with the corresponding geometry in the QM9.
Here, the MMFF geometry is regarded as a relatively easy-to-obtain geometry compared to the DFT geometry.

\textbf{Training setup}.
This study employed the following three GNNs using distinct 3D information to demonstrate the effectiveness of GeoTMI: EGNN \cite{EGNN} (implementation follows \cite{delfta}), SchNet \cite{Schnetpack}, and DimeNet++ \cite{DimeNet++}.
We appended the position update to the SchNet and DimeNet++ to ensure that the denoising process can be applied to them in the same manner in the coordinate update of the EGNN.
To train the models, we considered the DFT geometry from the QM9 dataset as $X$, and the corresponding MMFF geometry from QM9$_\mathrm{M}$ dataset as $\tilde{X}$.

Molecular 2D graph information is similar to MMFF geometry information in that it is also more readily available than DFT geometry.
There have been many attempts to predict accurate molecular properties from 2D graphs alone \cite{transformer-M, OGB-LSC,  GPS++}.
Recently, \citet{transformer-M} developed the Transformer-M model, which can utilize both 2D graph and 3D geometry information in training to predict molecular properties with high accuracy using only 2D graphs.
To compare the usefulness of 2D graphs and MMFF geometries as easy-to-obtain inputs, we evaluated the prediction performance of the Transformer-M model on 2D graph inputs without pre-training.
Note that the Transformer-M model reported their performance using $X$ based on pre-training in the original paper.

For all tested models, we used 100,000, 18,000, and 13,000 molecular data for training, validation, and testing, respectively, as in previous work by \citet{EGNN}.
The detailed hyperparameters of each model are introduced in \Cref{QM9_hyperparameter}.

\textbf{Results on molecular property prediction}.
\Cref{tab:QM9_results} shows the prediction accuracy according to input types and models.
Results for SchNet and DimeNet++ are shown in \cref{A.2}.
GeoTMI achieved  performance improvements across all properties and models.
For example, GeoTMI resulted in accuracy improvements of 7.0$\sim$27.1\% for EGNN, as shown in \Cref{tab:QM9_results}.
Meanwhile, Transformer-M trained using both 2D graphs and $X$ resulted in accuracy improvements of -15$\sim$21\% compared to the same model trained using 2D graphs only.
Despite the similar prediction performance of Transformer-M and EGNN based on $X$, it is noteworthy that for most properties, the Transformer-M models using 2D graphs for prediction were less accurate than the 3D GNNs tested.
This result implies that while both the MMFF geometry and the molecular the 2D graph are easy-to-obtain inputs, the MMFF geometry contains more useful information for learning the relationship between molecules and their quantum chemical properties.
Furthermore, we conducted additional experiments for three properties ($\mu$, $R^2$, and $U_0$) using scaffold-based splitting, a methodology that offers a more realistic and demanding setting for evaluating out-of-distribution (OOD) generalization (see \Cref{A.4}).
Once again, GeoTMI consistently improved its prediction performance, highlighting the robustness of GeoTMI.

\begin{table*}[h]
\caption{
MAEs for QM9's properties.
The best performance among the models that do not use $X$ in the inference (Infer.) process is shown in bold.
The values of Transformer-M using $X$ were borrowed from \citet{transformer-M}.
The performance of GeoTMI integrated with SchNet and DimeNet++ is provided in \Cref{A.2}.
}
\label{tab:QM9_results}
\vskip 0.15in
\centering
\centering
\resizebox{\textwidth}{!}{
\begin{tabular}{lcccccccccc}
\toprule
Methods & 
\begin{tabular}[c]{@{}c@{}} Input type\\
(Train $/$ Infer.) \end{tabular} &
\begin{tabular}[c]{@{}c@{}} $U_0$ \\
($\mathrm{meV}$) \end{tabular} & 
\begin{tabular}[c]{@{}c@{}} $\mu$ \\
($\mathrm{D}$) \end{tabular} & 
\begin{tabular}[c]{@{}c@{}} $\alpha$ \\
($\mathrm{Bohr^3}$) \end{tabular} &  
\begin{tabular}[c]{@{}c@{}} $\epsilon_\mathrm{HOMO}$ \\ ($\mathrm{meV}$) \end{tabular} & 
\begin{tabular}[c]{@{}c@{}} $\epsilon_\mathrm{LUMO}$ \\ ($\mathrm{meV}$) \end{tabular} & 
\begin{tabular}[c]{@{}c@{}} GAP \\ ($\mathrm{meV}$) \end{tabular} & 
\begin{tabular}[c]{@{}c@{}} $R^2$ \\ ($\mathrm{Bohr^2}$) \end{tabular} & 
\begin{tabular}[c]{@{}c@{}} $C_v$ \\ ($\frac{cal}{mol\cdot K}$) \end{tabular} & 
\begin{tabular}[c]{@{}c@{}} ZPVE \\ ($\mathrm{meV}$) \end{tabular} \\
\midrule
Transformer-M \cite{transformer-M}  & $X / X$ & 14.8 & - & - & 26.5 & 23.8 & - & - & - & - \\
EGNN & $X / X$ & 12.9 & 0.0350 & 0.0759 & 31.2 & 26.6 & 51.1 & 0.130 & 0.0336 & 1.59 \\
\midrule
Transformer-M & 2D $/$ 2D & 38.2 & 0.309 & 0.171 & 53.6 & 52.5 & 77.1 & 11.4 & 0.0669 & 4.79 \\
Transformer-M & 2D, $X$ $/$ 2D & 43.9 & 0.245 & 0.160 & 48.7 & 46.3 & 68.4 & 10.3 & 0.0683 & 3.85 \\
\midrule
EGNN & $\tilde{X} / \tilde{X}$ & 17.4 &	0.133 & 0.125 & 38.4 & 34.4 & 58.0 & 5.60 &0.0445 & 1.97 \\
EGNN + GeoTMI & $X, \tilde{X} / \tilde{X}$ & \textbf{14.5} & \textbf{0.100} & \textbf{0.105} & \textbf{35.7} & \textbf{31.2} & \textbf{53.2} & \textbf{4.08} & \textbf{0.0407} & \textbf{1.76} \\
\midrule
\midrule
\multicolumn{2}{l}{Improvements by GeoTMI (\%)} & 16.7 & 24.8 & 16.0 & 7.03 & 9.30 & 8.28 & 27.1 & 8.54 & 10.7 \\
\bottomrule
\end{tabular}
}
\vskip -0.1in
\end{table*}
 
\subsection{Reaction property prediction}
\label{subsection:Reaction property prediction}
A chemical reaction is a process in which reactant molecules convert to product molecules, passing through their TSs.
Predicting properties related to the reaction is important for understanding the nature of the chemistry \cite{Grambow2020}.
The barrier height, as one of the reaction properties, is defined as the energy difference between the geometry of the TS, $X^{TS}$, and the geometry of the reactant, $X^{R}$.
Commonly, optimizing a $X^{TS}$ utilizes both $X^R$ and the geometry of the product $X^P$.
However, this optimization process is typically resource-intensive, requiring approximately 10 times more computational resources than optimizing $X^R$ or $X^P$ individually \cite{TS-expensive}.
Thus, predicting accurate barrier height without $X^{TS}$ is necessary to reduce computational costs.

From this point of view, we focused on the task to predict DFT calculation-based barrier heights using $X^R$ and $X^P$ for the elementary reaction of the gas phase, as reported by \citet{DimeReaction}.
In contrast to most molecular properties, the property is a function of not just a single molecular geometry, but ($X^R$, $X^{TS}$), which can be interpreted as an optimized version of ($X^R$, $X^P$).
Thus, we considered that $X := (X^R, X^{TS})$ and $\tilde{X} := (X^R, X^P)$ in this task.

We used two datasets, released by \citet{Grambow2020}, for comparison with the previous work.
The first dataset consists of unimolecular reactions, namely CCSD(T)-UNI.
The second dataset, B97-D3, has 16,365 reactions.

\textbf{Training setup}. 
\citet{Spiekermann2022} proposed two models for predicting the barrier height using the 2D and 3D information of $\tilde{X}$, respectively.
They used D-MPNN for 2D GNN and DimeNet++ for 3D GNN, which will be referred to as the 2D D-MPNN model and the 3D DimeReaction (DimeRxn), respectively.
Here, the DimeRxn trained with $\tilde{X}$ showed lower performance than the 2D D-MPNN because the 3D GNNs were sensitive to the noise in the input geometry, as pointed out in another study \cite{TSDiff}.
Thus, our method, which removes the noise, can be useful for DimeRxn.

For the DimeRxn, we also adopted EGNN’s coordinate update scheme as a decoder to predict correct geometries. 
We trained the D-MPNN model, and the DimeRxn models without and with GeoTMI for CCSD(T)-UNI and B97-D3 datasets. 
The used data split and augmentation were the same as in the previous work by \citet{Spiekermann2022}.
In particular, we note that scaffold splitting was used on the datasets to evaluate the OOD generalization ability of the model.
The hyperparameters used are described in \Cref{rxn_hyperparameter}.

\begin{table*}[h]
\caption{
MAEs for predicted reaction barrier ($\mathrm{kcal/mol}$).
The best performance among the models that do not use $X$ in the inference (Infer.) process is shown in bold.
}
\label{tab:rxn_results}
\vskip 0.15in
\begin{center}
\begin{tabular}{lccc}
\toprule
\multirow{2}{*}{Methods} & Input type  & \multicolumn{2}{c}{Dataset} \\
\cmidrule{3-4}
 & (Train $/$ Infer.) & CCSD(T)-UNI & B97-D3 \\
\toprule
\textrm{DimeRxn}  & $X / X$ & 2.38 &  1.92  \\
\midrule
D-MPNN & 2D / 2D & 4.59 & 4.91   \\
\midrule
\textrm{DimeRxn}  & $\tilde{X} / \tilde{X}$  & 6.03 &  7.32  \\
\textrm{DimeRxn + GeoTMI} & $X, \tilde{X} / \tilde{X}$  & \textbf{3.90} &  \textbf{4.17}  \\
\midrule
\midrule
\multicolumn{2}{l}{Improvements by GeoTMI (\%)} & 35.3 & 43.0  \\
\bottomrule
\end{tabular}
\end{center}
\vskip -0.1in
\end{table*}

\textbf{Results on reaction property prediction}.
\Cref{tab:rxn_results} shows the results of prediction accuracy according to input types and models.
The DimeRxn trained with $X$ has the best prediction performance for all methods, while DimeRxn trained with $\tilde{X}$ has the worst prediction performance.
The result supports that DimeRxn is highly dependent on the quality of input geometry, as previously mentioned.
Thus, as we expected, GeoTMI, which is developed for learning a proper representation for predicting $Y$ from  $\tilde{X}$, induced accuracy improvements of 35.4\% and 43.0\% than DimeRxn without GeoTMI in terms of MAE for the CCSD(T)-UNI and B97-D3 datasets, respectively.
The results show that it outperforms the 2D D-MPNN model, again demonstrating the usefulness of the 3D easy-to-obtain geometry with GeoTMI, which is identified in the previous section.

\subsection{IS2RE prediction}
The OC20 dataset contains data consisting of the slab called a catalyst and molecules called adsorbates for each of the systems.
There are more atoms and a wider variety of atom types compared to previously studied datasets.
In detail, the dataset contains more than 460k pairs of IS, RS, and RE.
We focus on the IS2RE task, which is to predict the RE using the IS.
From the perspective of computational chemistry, the RS are obtained through costly quantum chemical calculations based on the IS.
Thus, in this task, we considered IS as $\tilde{X}$ and RS as $X$.

\textbf{Training setup}.
We adopted the Equiformer model \cite{Equiformer} to evaluate the effectiveness of GeoTMI in the IS2RE task.
The Equiformer model achieved state-of-the-art performance by using Noisy Nodes, where the IS2RS auxiliary task was integrated with the IS2RE task.
We note that the hyperparameters used are the same as in the previous work, except for the number of transformer blocks to train each model, due to the limitation of our computational resources.
We refer to the model trained only on the IS2RE task without Noisy Nodes as the baseline model, namely Equiformer*.
We performed a comparative analysis of three training frameworks: (1) Equiformer*, (2) Equiformer* + Noisy Nodes, and (3) Equiformer* + GeoTMI.
Thus, this evaluation with Equiformer can show the effectiveness of GeoTMI on the baseline model while allowing for comparison with Noisy Nodes.

To implement the Equiformer with GeoTMI, we followed much of the original Equiformer paper.
First, we used the same Noisy Nodes data augmentation.
Second, we used a similar node-level auxiliary loss for the IS2RS task.
The auxiliary loss predicts the node-level difference between target positions and noisy inputs, which corresponds to the denoising loss of $\mathcal{L}_d$.
The different points of the ``Equiformer* + GeoTMI'' compared to the ``Equiformer* + Noisy Nodes'' are as follows.
The noisy positions were explicitly updated by passing through GNN layers.
The detailed objective here is to calculate the difference between the updated noisy positions and the linearly interpolated target positions at each GNN layer, which we refer to as the gradual denoising loss in our paper.
In addition, we incorporated an auxiliary task that predicts the RE from the RS, denoted as $\mathcal{L}_{y,\text{correct}}$, which ultimately facilitates the training process of maximizing the three-term MI.


\begin{table*}[h]
\caption{
Results on the OC20 IS2RE test set with different methods based on Equiformer architectures \cite{Equiformer}.
The Equiformer* denotes a model that reduces the number of transformer blocks from 18 to 4 while keeping other hyperparameters the same.
The best performance among the Equiformer* models is shown in bold, and its improvement rate is shown in the last row.
}
\label{tab:OC20_results}
\vskip 0.15in
\begin{center}
\resizebox{\textwidth}{!}
{
\begin{tabular}{lcccccccccc}
\toprule
 & \multicolumn{5}{c}{Energy MAE (eV) $\downarrow$} & \multicolumn{5}{c}{EwT (\%) $\uparrow$} \\
\cmidrule(lr){2-6} \cmidrule(lr){7-11}
\multirow{2}{*}{Methods} & ID & \begin{tabular}[c]{@{}l@{}}OOD\\ Ads\end{tabular} &
\begin{tabular}[c]{@{}l@{}}OOD\\ Cat\end{tabular} &
\begin{tabular}[c]{@{}l@{}}OOD\\ Both\end{tabular} & Average &
ID & \begin{tabular}[c]{@{}l@{}}OOD\\ Ads\end{tabular} &
\begin{tabular}[c]{@{}l@{}}OOD\\ Cat\end{tabular} &
\begin{tabular}[c]{@{}l@{}}OOD\\ Both\end{tabular} & Average 
\\
\midrule
Equiformer + Noisy Nodes \cite{Equiformer} & 0.417 & 0.548 & 0.425 & 0.474 & 0.466 & 7.71 & 3.70 & 7.15 & 4.07 & 5.66  \\
\midrule
Equiformer* & 0.515 & 0.651 & 0.531 & 0.603 & 0.575 & 4.81 & 2.50 & 4.45 & 2.86 & 3.66   \\
Equiformer* + Noisy Nodes & 0.449 & 0.606 & 0.460 & 0.540 & 0.513 & 6.47 & 3.04 & 5.83 & 3.52 & 4.72  \\
Equiformer* + GeoTMI & \textbf{0.425} & \textbf{0.583} & \textbf{0.440} & \textbf{0.521} & \textbf{0.492}  & \textbf{7.60} & \textbf{3.86} & \textbf{6.97} & \textbf{4.03} & \textbf{5.62}  \\
\midrule
\midrule
Improvement (\%) & 17.6 & 10.5 & 17.1 & 13.7 & 14.4 & 58.0 & 54.4 & 56.6 & 40.9 & 53.8   \\
\bottomrule
\end{tabular}
}
\end{center}
\vskip -0.1in
\end{table*}

\textbf{Results on IS2RE with Noisy Nodes}.
We have summarized the IS2RE results in \Cref{tab:OC20_results}.
To evaluate each method, the MAE of the RE prediction using IS and the energy within a threshold (EwT), the percentage in which the MAE of the predicted energy is within 0.02 eV, are used.
Both Noisy Nodes and GeoTMI show performance improvements over the baseline Equiformer*, but GeoTMI achieves better performance gains across all metrics.
Despite the improvements, the prediction performance with GeoTMI is still lower than the original model with 18 transformer blocks in most cases in terms of MAE.
However, the prediction performance is similar to the original model for EwT and even better for OOD Ads.

\subsection{Ablation study}
\begin{table*}[b]
\caption{
Ablation study for GeoTMI.
BH and PU denote reaction barrier height and position update, respectively.
Prediction accuracy is compared in terms of MAE.
The most degraded results are underlined.
}
\label{tab:ablation_rxn_results}
\vskip 0.15in
\begin{center}
\begin{tabular}{lllcccc}
\toprule
Dataset & Property & Unit & GeoTMI & w/o $\mathcal{L}_d$ & w/o PU & w/o $\mathcal{L}_{\mathrm{y},\mathrm{correct}}$ \\
\midrule
\multirow{4}{*}{QM9 + QM9$_\mathrm{M}$} & $U_0$ & $\mathrm{meV}$ & 14.5 & \underline{21.0} & 14.2  & 15.2 \\
 & $R^2$ & $\mathrm{Bohr}^2$ & 4.08 & \underline{6.43} & 4.12  & 4.43 \\
 & $C_v$ & $\mathrm{cal/mol\cdot K}$ & 0.0407 & \underline{0.0503} & 0.0410  & 0.0401 \\
 & $\mu$ & $\mathrm{D}$ & 0.0997 & \underline{0.127} & 0.111  & 0.110 \\
\midrule
CCSD(T)-UNI & BH & $\mathrm{kcal/mol}$ & 3.90 &  4.41 & \underline{5.77}  & 4.27 \\
B97-D3 & BH & $\mathrm{kcal/mol}$ & 4.17 & 4.49 & \underline{7.19} &  4.45 \\
\bottomrule
\end{tabular}
\end{center}
\vskip -0.1in
\end{table*}

GeoTMI uses a combination of $\mathcal{L}_d$, $\mathcal{L}_{\mathrm{y},\mathrm{correct}}$, and the position update to improve the accuracy of predicting quantum chemical properties using $\tilde{X}$.
To verify an individual contribution of each component of GeoTMI, we conducted ablation studies. 
\Cref{tab:ablation_rxn_results} shows that all strategies are individually meaningful to reduce prediction error regardless of the properties.
In this experiment, it is noteworthy that training without either $\mathcal{L}_{\mathrm{y},\mathrm{correct}}$ or $\mathcal{L}_d$ is no longer maximizing the lower bound of the three-term MI.
The prediction performance of these models performs worse than trained models using GeoTMI except for $C_v$. 
The results imply that our proposed three-term MI maximization is key in prediction performance based on $\tilde{X}$.
Additionally, the table shows that position update, introduced by our intuition, is a key component for the BH prediction and helps increase overall prediction performance.

\section{Conclusion and Limitations}
In this study, we propose GeoTMI, a novel training framework designed to exploit easy-to-obtain geometry for accurate prediction of quantum chemical properties. 
The proposed framework is based on the Markov chain assumption and the theoretical basis that maximizes the mutual information (MI) between property, correct and corrupted geometries, mitigating the degradation in accuracy resulting from the use of the corrupted geometry.
To achieve this, GeoTMI incorporates a denoising process to effectively address the inherent challenges associated with acquiring correct 3D geometry.
In particular, we introduced the position update in the denoising process and gradual denoising loss to enhance the efficacy of the training process.

We have verified that GeoTMI consistently improves the prediction performance of 3D GNNs for three benchmark datasets.
Nevertheless, there are several limitations in this work. 
First, GeoTMI addresses the inductive bias by incorporating a soft regularization approach instead of directly vanishing it to zero. 
Second, we could not perform an extensive optimal hyperparameter search due to a lack of computational resources.
However, our consistent experimental results on various tasks showed the effectiveness and robustness of the GeoTMI.
In this light, we envision that the GeoTMI becomes a new solution to solve the practical infeasibility of high-cost 3D geometry in many other chemistry fields.

\section{Acknowledgement}
This work was supported by the Korea Environmental Industry and Technology Institute (Grant No. RS202300219144), the Technology Innovation Program funded by the Ministry of Trade, Industry \& Energy, MOTIE, Korea (Grant No. 20016007), and the Ministry of Science and ICT, Korea (Grant No. RS-2023-00257479).

\bibliography{neurips_2023}

\newpage
\appendix
\onecolumn
\section{Theoretical Basis}
\label{Appendix.A}
\subsection{Conditionally independence} 
\label{app:conditionally_independent_random_variables}

\begin{proposition}
\label{prop:C_independence2C_MI}
For any random variables $(\tilde{X}, X, Y)$, if $\tilde{X}$ and $Y$ are conditional independent given $X$, then $I(\tilde{X};Y\vert X)=0$.
\end{proposition}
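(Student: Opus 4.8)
The plan is to unfold the definition of the conditional mutual information and observe that the conditional independence hypothesis makes its integrand vanish identically. Recall that for the joint distribution $q(\tilde{X},X,Y)$ introduced in \cref{sec:problem_setup} one may write
\begin{equation}
I(\tilde{X};Y\mid X)=\mathbb{E}_{q(X)}\Big[D_{\mathrm{KL}}\big(q(\tilde{X},Y\mid X)\,\big\|\,q(\tilde{X}\mid X)\,q(Y\mid X)\big)\Big]. \nonumber
\end{equation}
First I would record what the hypothesis means: $\tilde{X}$ and $Y$ are conditionally independent given $X$ precisely when $q(\tilde{x},y\mid x)=q(\tilde{x}\mid x)\,q(y\mid x)$ for $q$-almost every $x$. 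Substituting this equality into the KL term above makes its argument the constant ratio $1$, so each inner divergence equals $0$; taking the expectation over $X$ then yields $I(\tilde{X};Y\mid X)=0$.

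Alternatively, and perhaps more transparently, I would use the entropy decomposition $I(\tilde{X};Y\mid X)=H(Y\mid X)-H(Y\mid X,\tilde{X})$. The conditional independence assumption is equivalent to $q(y\mid x,\tilde{x})=q(y\mid x)$ for $q$-almost every $(x,\tilde{x})$, which immediately gives $H(Y\mid X,\tilde{X})=H(Y\mid X)$ and hence the claim. I would also remark that, since KL divergence is non-negative and vanishes only when its two arguments coincide, the converse holds as well, so the vanishing of this conditional mutual information is in fact equivalent to the stated conditional independence; this is why the physical inductive bias in \cref{sec:training_objective} can be phrased either way.

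The only point requiring care — hardly an obstacle — is the measure-theoretic bookkeeping about null sets and about whether the variables are discrete or continuous. I would phrase the whole argument in terms of the underlying joint distribution $q$ so that it covers both cases uniformly, noting that the manipulations only require the relevant conditional distributions to exist, which is assumed throughout the paper.
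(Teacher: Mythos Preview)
Your proposal is correct and follows essentially the same approach as the paper: both reduce $I(\tilde{X};Y\mid X)$ to a log-ratio whose integrand becomes identically $1$ once the conditional-independence factorization $q(y\mid x,\tilde{x})=q(y\mid x)$ is substituted. The paper merely writes out the algebraic chain from $H(\tilde{X}\mid X)+H(Y\mid X)-H(\tilde{X},Y\mid X)$ to the ratio $p(Y\mid X)/p(Y\mid \tilde{X},X)$ explicitly, whereas you invoke the equivalent KL and entropy-difference forms directly; the substance is identical.
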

\begin{proof}[Proof of \cref{prop:C_independence2C_MI}]
From the definition of conditional mutual information (MI), we start from the below.
\begin{align}
I(\tilde{X};Y|X)&=H(\tilde{X}|X)+H(Y|X)-H(\tilde{X},Y|X) \nonumber\\
&=\mathbb{E}_{p(\tilde{X},X,Y)}\left[-\log \frac{p(\tilde{X}|X)p(Y|X)}{p(\tilde{X},Y|X)}\right] \nonumber\\
&=\mathbb{E}_{p(\tilde{X},X,Y)}\left[-\log \frac{p(\tilde{X},X)p(X,Y)/p(X)^2}{p(\tilde{X},X,Y)/p(X)}\right] \nonumber\\
&=\mathbb{E}_{p(\tilde{X},X,Y)}\left[-\log \frac{p(\tilde{X},X)p(X,Y)}{p(\tilde{X},X,Y)p(X)}\right] \nonumber\\
&=\mathbb{E}_{p(\tilde{X},X,Y)}\left[-\log \frac{p(X,Y)/p(X)}{p(\tilde{X},X,Y)/p(\tilde{X},X)}\right] \nonumber\\
&=\mathbb{E}_{p(\tilde{X},X,Y)}\left[-\log \frac{p(Y|X)}{p(Y|\tilde{X},X)}\right] \nonumber\\
&=\mathbb{E}_{p(\tilde{X},X,Y)}\left[-\log \cancel{\frac{p(Y|X)}{p(Y|\tilde{X},X)}}\right] &(\text{conditional independence})\nonumber\\
&=0 \nonumber 
\end{align}
\end{proof}

\subsection{Lower bound of three-term MI}
\label{app:lower_bound_of_three_term_MI}
Here, we derive the lower bound of three-term MI described at \cref{prop:MI_LB}.

\begin{proof}[Proof of \cref{prop:MI_LB}]
We need to prove the following inequality first.
\begin{equation}
H(Z)-I(\tilde{Z};Z)\ge I(Z;Y)-I(\tilde{Z};Z;Y) \quad\forall \text{ random variables } \tilde{Z},Z,Y \nonumber
\end{equation}
\begin{align}
\text{LHS}&=H(Z|\tilde{Z})\nonumber\\
\text{RHS}&=I(Z;Y)-[I(Z;Y)-I(Z;Y|\tilde{Z})]\nonumber\\
&=I(Z;Y|\tilde{Z})\nonumber\\
&=\mathbb{E}_{p(\tilde{Z},Z,Y)}\left[-\log \frac{p(Z|\tilde{Z})p(Y|\tilde{Z})}{p(Z,Y|\tilde{Z})}\right]\nonumber\\
&=\mathbb{E}_{p(\tilde{Z},Z,Y)}\left[-\log \frac{p(Z|\tilde{Z})}{p(Z|\tilde{Z},Y)}\right]\nonumber\\
&=H(Z|\tilde{Z})-H(Z|\tilde{Z},Y)\nonumber
\end{align}
Since conditional entropy is non-negative, LHS$\ge$RHS.

By applying the above, we derive the following two inequalities:
\begin{align}
I(\tilde{Z};Z;Y)&\ge I(\tilde{Z};Z)+I(Z;Y)-H(Z)\nonumber\\
&=-[H(Z)-I(\tilde{Z};Z)]-[H(Y)-I(Z;Y)]+H(Y)\nonumber\\
&=-H(Z|\tilde{Z}) -H(Y|Z)+H(Y), \nonumber 
\end{align}
\begin{align}
I(\tilde{Z};Z;Y)&\ge I(\tilde{Z};Y)+I(Z;Y)-H(Y)\nonumber\\
&=-[H(Y)-I(\tilde{Z};Y)]-[H(Y)-I(Z;Y)]+H(Y)\nonumber\\
&=-H(Y|\tilde{Z}) -H(Y|Z)+H(Y). \nonumber
\end{align}
By adding the two inequalities, we derive a lower bound:
\begin{equation}
I(\tilde{Z};Z;Y)\ge H(Y)\underbrace{-H(Y\vert Z) -\frac{1}{2}H( Y\vert \tilde{Z}) -\frac{1}{2}H(Z\vert \tilde{Z})}_\mathrm{LB}. \nonumber
\end{equation}
\end{proof}
Though the coefficient of the lower bound is different for each term, our practical loss is calculated as follows:
\begin{equation}
\mathcal{L}_\mathrm{total}=\mathcal{L}_{\mathrm{y, corrupted}}+\mathcal{L}_{\mathrm{y, correct}}+ \lambda\mathcal{L}_{d}, \nonumber
\end{equation}
where $\mathcal{L}_{\mathrm{y, corrupted}}$, $\mathcal{L}_{\mathrm{y, correct}}$, and $\mathcal{L}_{d}$ correspond to $H( Y\vert \tilde{Z})$, $H(Y\vert Z)$, and $H(Z\vert \tilde{Z})$, respectively, and the coefficient $\lambda$ is adopted for a practical reason.
The searching space of $\lambda$ is described in \cref{Appendix.C}.

\subsection{Choice of denoising loss}
\label{app:choice_of_denoising_loss}
\textbf{Surrogate loss}.
We decode $X$ from $\tilde{Z}$ as a surrogate task for $H(Z\vert \tilde{Z})$.
Since an MI is invariant to continuous and bijective mappings, the surrogate loss to reconstruct $X$ can be obtained by transforming from continuous representation space to data space.
Although general 3D GNNs do not satisfy this requirement, we have empirically confirmed robust results in various chemistry tasks using several GNNs.

Nevertheless, it is necessary to explain the difference between maximizing $I(\tilde{Z};Z)$ and $I(X|\tilde{Z})$.
As shown in \Cref{fig:1}(A), various properties can be obtained from $X$, implying that $X$ contains information that is necessary to predict all the properties.
However, $Z$ partially contains the information of $X$ in that it is a representation of $X$.
For a specific property prediction task, the ideal situation would be for $Z$ to contain only the information necessary to accurately predict $Y$.
In this context, mapping $\tilde{Z}$ to $X$ instead of $Z$ may cause superfluous information which is irrelevant to $Y$.
But even so, it does no harm for our overall purpose of addressing the inductive bias of physical relationship.

\textbf{Geometric denoising loss}.
We incorporated a geometric denoising loss as a surrogate loss in order to maximize the MI. 
Specifically, we aimed to maintain the equivariance of $X$ under rotation or translation of $\tilde{X}$.
To achieve this, we employed the SE(3)-equivariant decoders in this study.

For the loss metric, we chose the MAE of the atom-pair distances. 
This choice was natural considering the importance of bond distances in molecular geometry compared to absolute atomic positions.
Specifically, the denoising metric $\mathcal{L}$ is calculated based on the $(i,j)$ atom-pair distances $d_{ij}$ and $\tilde{d}_{ij}$ of $x$ and $\tilde{x}$, respectively, which is 
\begin{equation}
\label{eq:denoising_metric}
\mathcal{L}\left(x, \tilde{x}\right)= \frac{1}{\vert \mathcal{E}\vert}\sum_{(i,j)\in\mathcal{E}} \vert d_{ij} - \tilde{d}_{ij}\vert. \\
\end{equation}
Here, $\mathcal{E}$ denotes a set of edges in a graph of $\tilde{x}$.
Thus, the practical denoising loss is calculated as follows:
\begin{equation}
\label{eq:denoising_loss}
\mathcal{L}_d=\frac{1}{\vert \mathcal{D}\vert}\sum_{(\tilde{x},x,y)\in\mathcal{D}}\mathcal{L}\big(x, g\left(f\left( \tilde{x}\right)\right)\big),
\end{equation}
where $f$ and $g$ denote an encoder and a decoder, respectively, and $\mathcal{D}$ is the dataset.
However, in case of the OC20 dataset, we computed the loss in atomic positions to align with the baseline model for comparisons.

To effectively minimize the denoising loss, we introduced an explicit position update at each layer, inspired by the geometry optimization of quantum chemical calculation. 
These position update layers facilitated the differentiation between the encoders of $\tilde{X}$ and $X$, despite sharing model parameters. 
It involved incorporating relatively small parameter additions to induce distinct mappings.

To achieve effective denoising and align the encoding process with the geometry optimization, we introduced a gradual denoising loss.
This additional loss term guides the position updates at each layer to exhibit directional behavior by forcing the updated geometry to lie within the linear interpolation between $\tilde{X}$ and $X$.
We compared the prediction performance according to the degree of corruption of the input geometry, and confirmed that the prediction performance improved as the geometry closer to $X$ was used (see \cref{app:easy2obtain2corrupted}).
It indirectly explains the reason for the introducing explicit position update and gradual loss.
More details on the loss form and ablation studies of gradual denoising are provided in \cref{app:denoising_method_ablation}.

\section{Further analyses} \label{Appendix.B}
\subsection{Utilization of interpolated geometries} \label{app:easy2obtain2corrupted}
In this study, we have assumed a Markov chain given that the correct geometry $X$ is optimized from the corrupted geometry $\tilde{X}$, and the quantum chemical property $Y$ is computed from $X$.
Since $X$ is more adjacent to $Y$ in the Markov chain, predictions of $Y$ based on $X$ should be more accurate than those based on $\tilde{X}$.
Furthermore, from a physical standpoint, the transition from $\tilde{X}$ to $X$ is a form of geometry optimization that could be perceived as a Markov chain. 
Building on these assumptions, the intermediate geometry during the optimization process naturally lies between $X$ and $\tilde{X}$ within the Markov chain.

In this section, we explored the possibility of using interpolated geometry $\frac{X+\tilde{X}}{2}$ as a surrogate intermediate geometry.
We expected that the following two statements will be satisfied.

\begin{itemize}
\item Regardless of the types of properties, the baseline (trained with $\tilde{X}$) will always have a higher MAE than the ground truth (trained with $X$).
\item The interpolated geometry between $\tilde{X}$ and $X$ is a less corrupted geometry than $\tilde{X}$.
Thus, MAEs of the model using the interpolated geometry are placed between those of the baseline and the ground truth.
\end{itemize}

For the demonstration, we used three individual prediction models, where the training input for each model was $X$, $\frac{X+\tilde{X}}{2}$, and $\tilde{X}$, respectively.
As expected, the results presented in \Cref{tab:input_noise,tab:input_noise_reaction} consistently show that the predictive accuracy of the model decreases as the level of corruption in the geometry increases.
This empirical evidence underscores the superiority of interpolated geometry over $\tilde{X}$ as an input for predicting various quantum chemical properties.

Thus, we can consider $\tilde{X}\to\frac{X+\tilde{X}}{2}\to X$ as a proxy of the geometry optimization process which is a Markov chain.
Inspired by these findings, we integrated the explicit position update scheme of EGNN \cite{EGNN} and introduced an additional loss that exploits the interpolated geometry in the position update step following geometry optimization.

\begin{table}[h]
\scriptsize
\caption{
MAEs for QM9's properties according to different inputs. Values were obtained using EGNN.
($\tilde{X}$+$X$)/2 denotes the interpolated geometry generated from the mean of the atom-pair distance of $\tilde{X}$ and $X$.
}
\label{tab:input_noise}
\vskip 0.15in
\begin{center}
\begin{small}
\begin{tabular}{lccccr}
\toprule
Target & Unit & $\tilde{X}$ & ($\tilde{X}$+$X$)/2 & $X$ \\
\midrule
  $U_0$ & $\mathrm{meV}$ & 17.4 &  14.2 & 12.9  \\
  $\mu$ & $\mathrm{D}$ & 0.133 & 0.0807 & 0.0350 \\
  $\alpha$     & $\mathrm{Bohr}^3$ & 0.125 & 0.0947 & 0.0759 \\
 $\epsilon_\mathrm{HOMO}$      & $\mathrm{meV}$ & 38.4 & 33.4 & 31.2  \\
 $\epsilon_\mathrm{LUMO}$      & $\mathrm{meV}$ & 34.4 & 28.9 & 26.6  \\
  GAP     & $\mathrm{meV}$ & 58.0 & 53.0 & 51.1 \\
  $R^2$     & $\mathrm{Bohr}^2$ & 5.60 & 2.92 & 0.130 \\
  $C_v$     & $\mathrm{cal/mol\cdot K}$ & 0.0445 & 0.0377 &  0.0336 \\
  ZPVE     &   $\mathrm{meV}$ & 1.97 & 1.74 & 1.59 \\
\bottomrule
\end{tabular}
\end{small}
\end{center}
\vskip -0.1in
\end{table}

\begin{table}[h]
\scriptsize
\caption{
MAEs (kcal/mol) of predicted barrier heights according to different inputs. Values were obtained using DimeReaction.
($\tilde{X}$+$X$)/2 denotes the pairwise interpolated geometry generated from the mean of the atom-pair distance of ($X^{R}$, $X^{P}$) and ($X^{R}$, $X^{TS}$).
}
\label{tab:input_noise_reaction}
\vskip 0.15in
\begin{center}
\begin{small}
\begin{tabular}{lcccr}
\toprule
Dataset  & $\tilde{X}$ & ($\tilde{X}$+$X$)/2 & $X$ \\
\midrule
  CCSD(T)-UNI  & 6.49 &  5.30 & 2.38  \\
  B97-D3 & 8.24 & 3.91 & 1.92 \\
\bottomrule
\end{tabular}
\end{small}
\end{center}
\vskip -0.1in
\end{table}

\subsection{Ablation study for denoising process}
In GeoTMI, the denoising process proceeds throughout all GNN layers.
The relationship between the number of GNN layers and the average absolute difference based on the atom-pair distance (D-MAE) between the correct geometry and the predicted geometry of the atomic positions at the last layer was investigated using the EGNN model (see \Cref{num_layer}).
We note that each EGNN was trained using only the denoising loss to confirm the denoising power alone.

\label{app:denoising_method_ablation}
\begin{figure}[ht]
\centering
\centerline{\includegraphics[width=10.0cm]{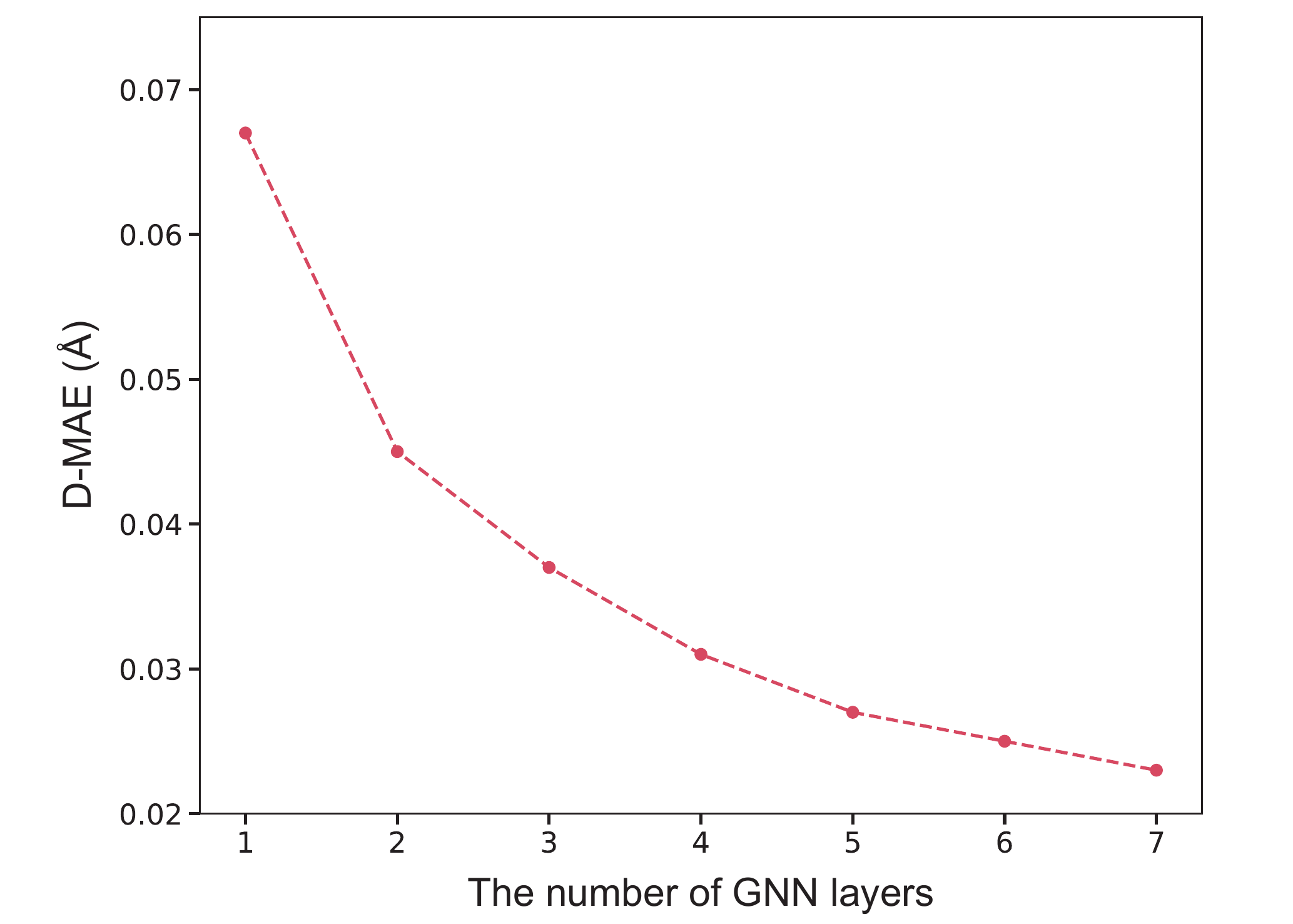}}
\caption{D-MAE according to the number of GNN layers of the EGNN. The D-MAE is measured using $X$ and the denoised $\tilde{X}$ derived from the last layer.}
\label{num_layer}
\end{figure}

In particular, the denoising process with the small number of GNN layers could not restore $X$.
We note that the D-MAE between the MMFF and the DFT geometries is 0.0712\AA.
If the denoising process couldn't restore $X$, the corresponding denoising loss can have a negative effect on learning by maintaining a large amount compared to other losses in the training process.
In this respect, we designed a gradual denoising loss, which is a slightly modified version of \cref{eq:denoising_loss} with a linearly interpolated target instead of $x$.
For each position update layer, a target varies linearly from $\tilde{x}$ to $x$, and the loss is calculated according to \cref{eq:denoising_metric}.
Specifically, the target distance of the $l$-th layer among a total of $L$ layers is as follows:
\begin{equation}
\bar{d}_{ij}^l=\frac{1}{L}\left(ld_{ij}+(L-l)\tilde{d}_{ij}\right). \nonumber
\end{equation}
\Cref{tab:gradual} shows that the gradual denoising process is useful to increase model performance.
Thus, we adopted the gradual denoising loss to predict quantum chemical properties.

\begin{table*}[htbp]
\scriptsize
\caption{Impact of each denoising task in terms of MAE.
We tested gradual denoising (``GeoTMI''), ``w/o Gradual denoising'', and ``Denoising last only'' for four properties.
When the gradual denoising was not used, the objective of all denoising processes of GNN layers is restoring $X$.
The ``Denoising last only'' means that denoising objective contains only the denoising loss of the last GNN layer without the losses of other GNN layers.
}
\label{tab:gradual}
\vskip 0.15in
\begin{center}
\begin{small}
\begin{tabular}{ccccc}
\toprule
Property & Unit & GeoTMI & w/o Gradual denoising  & Denoising last only\\
\midrule
$U_0$ & $\mathrm{meV}$ & 14.5  & 15.4 & 14.9\\
$R^2$ & $\mathrm{Bohr^2}$ & 4.08  & 4.22 & 4.97\\
$C_v$ & $\mathrm{cal/mol\cdot K}$ &0.0407  & 0.0413 & 0.0423\\
$\mu$ & $\mathrm{D}$ & 0.0997  & 0.100 &  0.132 \\
\bottomrule
\end{tabular}
\end{small}
\end{center}
\vskip -0.1in
\end{table*}

\subsection{Application of GeoTMI on SchNet and DimeNet++ for QM9 and QM9$_\mathrm{M}$ tasks} \label{A.2}

GeoTMI is applicable regardless of the 3D geometry model architectures.
In this regard, we tested the effectiveness of GeoTMI using two additional 3D GNNs: SchNet \cite{Schnetpack} and DimeNet++ \cite{DimeNet++}.
For comparison, we also identified the performance of QM9's properties prediction from $X$ for the two models (see \Cref{tab:QM9_various_models}).
The training, validation, and testing data were used as in \Cref{subsection:Molecular property prediction}.
  
\begin{table*}[h]
\caption{
MAEs for QM9's properties.
GeoTMI was tested with two different 3D GNNs: SchNet and DimeNet++.
}
\label{tab:QM9_various_models}
\vskip 0.15in
\centering
\resizebox{\textwidth}{!}{
\begin{tabular}{lcccccccccc}
\toprule
Approach & 
\begin{tabular}[c]{@{}c@{}} Input type\\
(Train $/$ Infer.) \end{tabular} &
\begin{tabular}[c]{@{}c@{}} $U_0$ \\
($\mathrm{meV}$) \end{tabular} & 
\begin{tabular}[c]{@{}c@{}} $\mu$ \\
($\mathrm{D}$) \end{tabular} & 
\begin{tabular}[c]{@{}c@{}} $\alpha$ \\
($\mathrm{Bohr^3}$) \end{tabular} &  
\begin{tabular}[c]{@{}c@{}} $\epsilon_\mathrm{HOMO}$ \\ ($\mathrm{meV}$) \end{tabular} & 
\begin{tabular}[c]{@{}c@{}} $\epsilon_\mathrm{LUMO}$ \\ ($\mathrm{meV}$) \end{tabular} & 
\begin{tabular}[c]{@{}c@{}} GAP \\ ($\mathrm{meV}$) \end{tabular} & 
\begin{tabular}[c]{@{}c@{}} $R^2$ \\ ($\mathrm{Bohr^2}$) \end{tabular} & 
\begin{tabular}[c]{@{}c@{}} $C_v$ \\ ($\frac{cal}{mol\cdot K}$) \end{tabular} & 
\begin{tabular}[c]{@{}c@{}} ZPVE \\ ($\mathrm{meV}$) \end{tabular} \\
\midrule
SchNet  & $X / X$ & 17.0 & 0.0391 & 0.0859 & 44.3 & 34.9 & 68.6 & 0.170 & 0.0313 & 1.67 \\
DimeNet++ & $X / X$ & 8.99 & 0.0382 & 0.0583 & 25.5 & 20.8 & 43.0 & 0.342 & 0.0255 & 1.30 \\
\midrule
SchNet & $\tilde{X} / \tilde{X}$ & 27.3 & 0.208 & 0.160 & 61.4 & 53.9 & 85.8 & 8.32 & 0.0595 & 2.57 \\
SchNet + GeoTMI & $X, \tilde{X} / \tilde{X}$ & 27.3 & 0.139 & 0.131 & 52.0 & 45.0 & 74.4 & 6.44 & 0.0566 & 2.38 \\
\midrule
DimeNet++ & $\tilde{X} / \tilde{X}$ & 16.9 & 0.140 & 0.123 & 37.5 & 35.1 & 56.3 & 5.82 & 0.0462 & 2.10  \\
DimeNet++ + GeoTMI & $X, \tilde{X} / \tilde{X}$ &14.0 & 0.127 & 0.109 & 35.0 & 32.6 & 55.3 & 5.29 & 0.0418 & 1.90 \\
\midrule
\midrule
\multicolumn{2}{l}{Improvements in SchNet (\%)} & 0.00	& 33.2 &	18.1 &	15.3 & 16.5 & 13.3 & 22.6 & 4.87 & 7.39 \\
\multicolumn{2}{l}{Improvements in DimeNet++ (\%)} & 17.2 &	9.29 &  11.4 &	6.72 & 	7.12 &	1.78 & 9.11 &	9.52 & 9.52
 \\
\bottomrule
\end{tabular}
}
\vskip -0.1in
\end{table*}

\subsection{Performance of GeoTMI based on OOD data for QM9's properties} \label{A.4}
We identified the out-of-distribution (OOD) generalization ability of GeoTMI using the EGNN model on QM9's properties.
To this end, we arranged 100,000, 18,000, and 13,000 molecules for training, validation, and testing, respectively, based on a scaffold split, ensuring that the molecules in the OOD were included in the test set.
The split is based on the Bemis–Murcko scaffold \cite{Bemis} implemented in RDKit \cite{rdkit} library.
\Cref{tab:pearson_coeff_scaffold_split} shows that GeoTMI has improved the model prediction performance regardless of OOD data for the tested properties.
The results show the robustness of GeoTMI in terms of OOD generalization ability.

\begin{table}[h]
\caption{
The MAEs for $R^2$, $\mu$, and $U_0$ in the QM9$_\mathrm{M}$.
We verified the performance of GeoTMI on testing datasets using random and scaffold splits, respectively.
For the MAE of each property, the same units are used as in \Cref{tab:QM9_results}.
}
\label{tab:pearson_coeff_scaffold_split}
\vskip 0.15in
\centering
\begin{tabular}{llccc}
\toprule
Split & Approach & $\mu$ & $R^2$ & $U_0$ \\
\midrule
\multirow{2}{*}{Random} & EGNN  & 0.133 & 5.60 & 17.4 \\
& EGNN + GeoTMI & 0.100 & 4.08 & 14.5 \\
\midrule
\multirow{2}{*}{Scaffold} & EGNN  & 0.195 & 10.4 & 33.0 \\
& EGNN + GeoTMI & 0.149 & 7.60 & 23.4 \\
\bottomrule
\end{tabular}
\vskip -0.1in
\end{table}

\newpage

\section{Experimental details}
\label{Appendix.C}
\subsection{Parameter details}
\textbf{QM9$\mathrm{_M}$.}
\label{QM9_hyperparameter}
We used the reported hyperparameters optimized for QM9 from previous studies for EGNN \cite{EGNN}, SchNet \cite{Schnetpack}, DimeNet++ \cite{DimeNet++}, and Transformer-M \cite{transformer-M}, respectively.
The search space of $\lambda$ is specified in \Cref{tab:QM9_lambda_search_space}.

\begin{table}[h]
\scriptsize
\caption{The search space of $\lambda$ on $\mathrm{QM9_M}$ task.}
\label{tab:QM9_lambda_search_space}
\vskip 0.15in
\begin{center}
\begin{small}
\begin{tabular}{lccc}
\toprule
Target & EGNN & DimeNet++ & SchNet \\
\midrule
$U_0$    & [0.1, 0.5, 1.0, 10.0] & 0.1 & [0.1, 0.5, 1.0, 5.0, 10.0] \\
$\mathrm{\mu}$   & [0.1, 0.5, 1.0, 10.0] & 0.1 & [0.1, 0.5, 1.0, 5.0, 10.0] \\
$\alpha$ & 0.1 & 0.1 & [0.1, 0.5, 1.0, 5.0, 10.0] \\
$\epsilon_\mathrm{HOMO}$  & 0.1 & 0.1 & 0.1 \\
$\epsilon_\mathrm{LUMO}$  & 0.1 & 0.1 & 0.1 \\
$\mathrm{GAP}$   & 0.1 & 0.1 & 0.1 \\
$R^2$   & [0.1, 0.5, 1.0, 10.0] & 0.1 & [0.1, 0.5, 1.0, 5.0, 10.0] \\
$C_v$   & 0.1 & 0.1 & 0.1 \\
$\mathrm{ZPVE}$  & 0.1 & 0.1 & 0.1 \\
\bottomrule
\end{tabular}
\end{small}
\end{center}
\vskip -0.1in
\end{table}

\textbf{Reaction barrier prediction}.
\label{rxn_hyperparameter}
For both DimeReaction models with and without GeoTMI, we used the same hyperparameters as \citet{Spiekermann2022} except for the number of epochs and batch sizes.
\Cref{tab:rxn_hyperparameter} shows the values of the number of epochs and batch sizes used in this work.
The search space of $\lambda$ is specified in \Cref{tab:lambda_search_space}.

\begin{table}[h]
\caption{
The hyperparameters used for training DimeReaction.
}
\label{tab:rxn_hyperparameter}
\vskip 0.15in
\begin{center}
\begin{tabular}{lccr}
\toprule
Parameter & CCSD(T)-UNI & B97-D3 \\
\midrule
Epoch    & 200 & 200 \\
Batch size & 32 & 64 \\
Warm-up epochs & 3 & 3 \\
\bottomrule
\end{tabular}
\end{center}
\vskip -0.1in
\end{table}

\begin{table}[h!]
\caption{The search space of $\lambda$ for CCSD(T)-UNI and B97-D3 datasets.}
\label{tab:lambda_search_space}
\vskip 0.15in
\begin{center}
\begin{tabular}{cccr}
\toprule
Parameter & CCSD(T)-UNI & B97-D3 \\
\midrule
$\lambda$    & [0.005, 0.01, 0.05, 0.1, 0.5, 1.0] & [0.005, 0.01, 0.05, 0.1, 0.5, 1.0] \\
\bottomrule
\end{tabular}
\end{center}
\vskip -0.1in
\end{table}

\end{document}